\newcommand{\be}{\begin{equation}}
\newcommand{\ee}{\end{equation}}
\newcommand{\bd}{\begin{displaymath}}
\newcommand{\ed}{\end{displaymath}}
\newcommand{\ba}{\begin{eqnarray}}
\newcommand{\ea}{\end{eqnarray}}
\def\R{{I \!\! R}}
\def\a{\alpha}
\def\b{\beta}
\def\v12{(v-w)}
\def\({\left(}
\def\){\right)}
\def\bgr#1\egr{{\allowdisplaybreaks\begin{gather}#1\end{gather}}}
\def\bma#1\ema{{\allowdisplaybreaks\begin{align}#1\end{align}}}
\def\oplem#1{\begin{lemma}\, {\rm #1}\, \it }
\def\cllem{\end{lemma}\rm \par }
\def\opthm#1{\begin{theorem}\, {\rm #1}\, \it }
\def\clthm{\end{theorem}\rm \par }
\def\R{\mathbb{R}}
\newcommand{\fer}[1]{(\ref{#1})}
\newcommand{\bq}{\begin{equation}}
\newcommand{\eq}{\end{equation}}
\def\bqa{\begin{eqnarray}}
\def\eqa{\end{eqnarray}}
\def\bd{\begin{displaymath}}
\def\ed{\end{displaymath}}
\newtheorem{thm}{Theorem}
\newtheorem{lem}[thm]{Lemma}
\theoremstyle{remark}
\theoremstyle{definition}
\begin{document}

\title{Heat equation and the sharp Young's inequality}

 \author{Giuseppe Toscani \thanks{Department of Mathematics, University of Pavia, via Ferrata 1, 27100 Pavia, Italy.
\texttt{giuseppe.toscani@unipv.it} }}

\maketitle

\begin{center}\small
\parbox{0.85\textwidth}{
\textbf{Abstract.} We  show that the sharp Young's inequality for
convolutions first obtained by Bechner \cite{Bec} and Brascamp-Lieb
\cite{BL} can be derived from the monotone in time evolution of a Lyapunov
functional of the convolution of two solutions to the heat equation, with
different diffusion coefficients, first introduced by Bennett and Bez in
\cite{BB}. Our proof is based on a suitable adaptation of an old idea of
Stam \cite{Sta} and Blachman \cite{Bla}, used to obtain  Shannon's entropy
power inequality.

\medskip
\textbf{Keywords.} Heat equation, Young's inequality.}
\end{center}

\medskip

\section{Introduction}

The goal of this note is to present a new proof of the Young inequality in
the sharp form obtained by Bechner \cite{Bec},
 \be\label{young}
\| f*g\|_r \le (A_pA_qA_{r'})^n\| f\|_p\| g\|_q .
 \ee
In \fer{young}  $f\in L^p(\R^n)$,  $g\in L^q(\R^n)$, $1 < p,q,r <
\infty$ and $1/p + 1/q = 1 +1/r$. Moreover, the constant $A_m$ which
defines the sharp constant is given by
 \be\label{c+}
 A_m = \left(\frac{m^{1/m}}{m'^{1/m'}} \right)^{1/2}
 \ee
where primes always denote dual exponents, $1/m + 1/m' = 1$.

The best constants in Young's inequality were found by Beckner \cite{Bec},
using tensorisation arguments and rearrangements of functions. In
\cite{BL}, Brascamp and Lieb derived them from a more general inequality,
which is nowadays known as the Brascamp-Lieb inequality. The expression of
the best constant, in the case in which both $f$ and $g$ are probability
density functions, is obtained by noticing that inequality \fer{young} is
saturated by Gaussian densities. This principle has been largely utilized
by Lieb in a more recent paper \cite{Li}. Among many other results, this
paper contains a new proof of the Brascamp-Lieb inequality. In \cite{BL},
Brascamp and Lieb noticed that the sharp form of Young inequality also
holds in the so-called reverse case
  \be\label{young-r}
\| f*g\|_r \ge (A_pA_qA_{r'})^n\| f\|_p\| g\|_q ,
 \ee
where now $0 < p,q,r < 1$ while, as in Young inequality \fer{young}, $1/p
+ 1/q = 1 +1/r$. In this case, however, the dual exponents $p',q',r'$ are
negative, and
 \be\label{c-}
 A_m = \left(\frac{m^{1/m}}{|m'|^{1/|m'|}} \right)^{1/2}.
 \ee
The proof of this sharp reverse Young inequality was subsequently
simplified by Barthe \cite{Ba}. While the original proof in \cite{BL} was
rather complicated, and used tensorisation, Schwarz symmetrization,
Brunn-Minkowski and some not so intuitive phenomenon for the measure in
high dimension, the new proof in \cite{Ba} was based on relatively more
elementary arguments and gave a unified treatment of both cases, the Young
inequality \fer{young} and its reverse form \fer{young-r}. As a matter of
fact, the proof of the main result in \cite{Ba} relies on a
parametrization of functions which was used in \cite{HM} and was suggested
by Brunn's proof of the Brunn-Minkowski inequality.

In a recent paper,  Young's inequality  has been seen in a different light
by Bennett and Bez \cite{BB}. In their paper, Young's inequality is
derived by looking at the monotonicity of a suitable functional of the
solution to the heat equation. Even if not explicitly mentioned in the
paper, this idea connects Young's inequality in sharp form with other
inequalities, for which the proof exactly moved along the same idea.

The connections of the sharp form of Young inequality with other
inequalities has been enlightened by Lieb in \cite{Lieb}. He proved in
fact that, by letting $p,q,r \to 1$ in \fer{young}, the sharp form of
Young's inequality reduces to another well-known inequality in information
theory, known as Shannon's entropy power inequality \cite{Sha}.

In its original version, Shannon's entropy power inequality  gives a
lower bound on Shannon's entropy functional of the sum of
independent random variables $X, Y$ with densities
 \be\label{entr}
\exp\left(2H(X+Y)\right) \ge \exp\left(2H(X)\right)+ \exp \left(
2H(Y)\right),
 \ee
with equality if $X$ and $Y$ are Gaussian random variables. Shannon's
entropy functional of the probability density function $f(x)$ of $X$ is
defined by
 \be\label{h-f}
H(X) = H(f) = - \int_\R f(x) \log f(x)\, dx.
 \ee
Note that Shannon's entropy functional coincides to Boltzmann's
$H$-functional \cite{Cer} up to a change of sign.  The entropy-power
 \[
 N(X) = N(f) = \exp\left(2H(X)\right)
 \]
(variance of a Gaussian random variable with the same Shannon's
entropy functional) is maximum and equal to the variance when the
random variable is Gaussian, and thus, the essence of \fer{entr} is
that the sum of independent random variables tends to be \emph{more
Gaussian} than one or both of the individual components.

The first rigorous proof of inequality \fer{entr} was given by Stam
\cite{Sta} (see also Blachman \cite{Bla} for the generalization
 to $n$-dimensional random vectors), and was based on an
identity which couples Fisher's information with Shannon's entropy
functional \cite{CTh}.

The original proofs of Blachman and Stam make a substantial use of the
solution to the heat equation
 \be\label{heat}
\frac{\partial f (x,t)}{\partial t} =  \Delta f(x,t),
 \ee
that is, for $t \ge 0$, of the function $f(x,t)= f*M_{2t}(x)$, where
$M_t(x)$ denotes the Gaussian density in $\R^n$ of variance $t$
 \be\label{max}
M_t(x) = \frac 1{(2\pi t)^{n/2}}\exp\left(\frac{|v|^2}{2t}\right).
 \ee
Other variations of the entropy--power inequality are present in the
literature. Costa's strengthened entropy--power inequality
\cite{Cos}, in which one of the variables is Gaussian, and a
generalized inequality for linear transforms of a random vector due
to Zamir and Feder \cite{ZF}.

Also, other properties of Shannon's entropy-power $N(f)$ have been
investigated so far. In particular, the \emph{concavity of entropy
power} theorem, which asserts that
 \be\label{conc}
\frac{d^2}{dt^2}\left(N(f*M_{2t})\right) \le 0
 \ee
Inequality \fer{conc} is due to Costa \cite{Cos}.  More recently, a
short and simple proof of \fer{conc} has been obtained by Villani
\cite{Vil}, using an old idea by McKean \cite{McK}.

Summarizing, the proof of Stam is based on the following argument.
Let $f(x,t)$ and $g(x,t)$ be two solutions of the heat equation
\fer{heat} corresponding to the initial data $f(x)$ (respectively
$g(x)$). If the entropies of the initial data are finite, one
considers the evolution in time of the  functional $\Theta_{f,g}
(t)$ defined by
 \be\label{stam}
\Theta_{f,g}(t) = \frac{\exp\{2H(f(t))\} +
\exp\{2H(g(t))\}}{\exp\{2H(f(t)*g(t))\}}.
 \ee
Evaluating the time derivative of $\Theta_{f,g}(t)$, and using a key
inequality for Fisher information on convolutions, shows that
$\Theta_{f,g}(t)$ is increasing in time, and converges towards the
constant value $\Theta_{f,g}(+\infty) = 1$, thus proving inequality
\fer{entr}. Note that this method of proof also determines the cases
of equality in \fer{entr}. It is interesting to remark that the
evaluation of the limit of $\Theta_{f,g}(t)$, as $t\to \infty$, is
made easy in reason of a scaling property. Indeed, the (Lyapunov)
functional $\Theta(f,g)$ is invariant with respect to the scaling
 \be\label{scal}
 f(x) \to f_a(x) = \frac 1{a^n} f\left( \frac x{a^n}\right), \quad a >0,
 \ee
which preserves the total mass of the function $f$. The importance
of this property will be clarified later on.

The proof by Stam is a \emph{physical} proof, in the spirit of
Boltzmann $H$-theorem \cite{Cer} in kinetic theory of rarefied
gases, where convergence towards the Maxwellian equilibrium is shown
in consequence of the monotonicity in time of the logarithmic
entropy \fer{h-f}.

In the rest of the paper, inspired by the Stam's approach to the proof of
Shannon's entropy power inequality, we will present a \emph{physical}
proof of both direct and reverse Young's inequalities, which is based on
the two ingredients specified above: a suitable use of solutions to the
heat equations, coupled with the scaling invariance property \fer{scal}.
Our proof is alternative to the proof of \cite{BB}, and relies on a result
which generalizes Stam proof of subadditivity of Fisher information.

In the following Section \ref{hol} we will describe how the method works
by proving H\"older inequality. Even if this is a well-known result
\cite{Car, BB, B1}, it will give indication on the underlying methodology.
Next, Section \ref{you} will contain our main result.

To make computations as simple as possible, we will present all
proofs in one dimension. Without loss of generality, in fact, one
can easily argue that identical proofs hold in dimension $n$, with
$n>1$. Also, if not strictly necessary, we will restrict ourselves
to consider as functions probability density functions. Since our
computations involve solutions to the heat equation, all details
involving regularity and the various integrations by parts can be
assumed to hold true.

The basic idea used here is that many inequalities can be viewed as the
tendency of various Lyapunov functionals of the solution to the heat
equation to reach their maximum value as time tends to infinity. The
discovery of a Lyapunov functional which allows to prove Young inequality
is only one of the possible application of this idea \cite{B1, BB, B2}. In
a forthcoming paper \cite{Tos}, we are going to develop this strategy by
revisiting various well-known inequalities in terms of the monotonicity of
suitable Lyapunov functionals.

\section{Heat equation and H\"older inequality}\label{hol}

We begin by showing that H\"older's inequality can be viewed as a
consequence of the time monotonicity of a suitable Lyapunov functional of
the solution to the heat equation \cite{Car,BB}.

H\"older's inequality for integrals states that, if $p,q
>1$ are such that $1/p + 1/q = 1$
 \be\label{holder}
 \int_\R |f(x)g(x)| \, dx \le \left( \int_\R |f(x)|^p\, dx \right)
 ^{1/p}\left( \int_\R |g(x)|^q\, dx \right)
 ^{1/q}.
 \ee
Moreover, there is equality in \fer{holder} if and only $f$ and $g$ are
such that there exist positive real numbers $a$ and $b$ such that $af^p(x)
= bg^q(x)$ almost everywhere. H\"older's inequality can be proven in many
ways, for example resorting to Young's inequality for constants, which
states that, if $1/p + 1/q = 1$
 \be\label{yo}
cd \le \frac{c^p}p + \frac{d^q}q,
 \ee
for all nonnegative $c$ and $d$, where equality is achieved if and only if
$c^p = d^q$.

Without loss of generality, one can assume that the functions $f,g$
in \fer{holder} are nonnegative. A different way to achieve
inequality \fer{holder} is contained into the following

\begin{thm} Let $\Phi_{u,v}(t)$ be the  functional
 \be\label{f-hol}
\Phi_{u,v}(t) = \int_\R u(x,t)^{1/p}v(x,t)^{1/q} \, dx,
 \ee
where $1/p + 1/q = 1$, and  $u(x,t)$ and $v(x,t)$, $t > 0$, are
solutions to the heat equation corresponding to the initial values
$u(x)\in L^1(\R)$ (respectively $v(x)\in L^1(\R)$). Then
$\phi_{u,v}(t)$ is increasing in time from
 \[
\Phi_{u,v}(t=0) = \int_\R u(x)^{1/p}v(x)^{1/q} \, dx,
 \]
to
 \[
\lim_{t \to \infty} \Phi_{u,v}(t) = \int_\R u(x) \, dx \, \int_\R
v(x) \, dx.
 \]
\end{thm}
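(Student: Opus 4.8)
The plan is to establish the claim by (i) showing that the functional $\Phi_{u,v}(t)$ is nondecreasing in $t$ via a differentiation-under-the-integral-sign computation, and (ii) computing the limit as $t\to\infty$ using the scaling/mass-conservation structure of the heat flow. Since all the integration-by-parts manipulations are assumed to be licit (per the excerpt's blanket regularity assumption), the argument is essentially formal.

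**First I would compute the time derivative.** Writing $u_t=u_{xx}$, $v_t=v_{xx}$, differentiate under the integral:
\[
\frac{d}{dt}\Phi_{u,v}(t)=\int_\R\Big(\tfrac1p\,u^{1/p-1}u_{xx}\,v^{1/q}+\tfrac1q\,u^{1/p}v^{1/q-1}v_{xx}\Big)\,dx.
\]
Integrate each term by parts once to move one derivative off $u_{xx}$ (resp.\ $v_{xx}$). The resulting integrand should organize into a perfect square (up to a positive constant): the natural change of variables is to pass to $U=u^{1/p}$, $V=v^{1/q}$, or equivalently to introduce the "scores" $u_x/u$ and $v_x/v$. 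After the integration by parts one expects to land on something of the shape
\[
\frac{d}{dt}\Phi_{u,v}(t)=\frac1{pq}\int_\R u^{1/p}v^{1/q}\Big(\frac{u_x}{u}-\frac{v_x}{v}\Big)^2\,dx\ \ge\ 0,
\]
using $1/p+1/q=1$ to collapse the cross terms. This both proves monotonicity and identifies the equality case ($u_x/u\equiv v_x/v$, i.e.\ $u^a\propto v^b$), recovering the sharp Hölder statement with its equality characterization.

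**Next I would evaluate the endpoints.** At $t=0$ the value is $\int_\R u(x)^{1/p}v(x)^{1/q}\,dx$ by definition, which is the left-hand side of \fer{holder} with $f=u^{1/p}$, $g=v^{1/q}$. For $t\to\infty$, the key is that $u(\cdot,t)=u*M_{2t}$ spreads out: mass is conserved, $\int_\R u(x,t)\,dx=\int_\R u(x)\,dx=:m_u$, while $u(x,t)\approx m_u\,M_{2t}(x)$ for large $t$ in the sense that $t^{1/2}u(t^{1/2}x,t)\to m_u\,M_2(x)$. Rescaling $x=t^{1/2}y$ in $\Phi_{u,v}(t)$ and using $1/p+1/q=1$ so the Jacobian $t^{1/2}$ exactly compensates the product $u^{1/p}v^{1/q}\sim t^{-1/(2p)}t^{-1/(2q)}=t^{-1/2}$, one gets in the limit $\int_\R (m_uM_2(y))^{1/p}(m_vM_2(y))^{1/q}\,dy=m_u^{1/p}m_v^{1/q}\int_\R M_2(y)\,dy=m_u^{1/p}m_v^{1/q}$. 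Hmm — this produces $m_u^{1/p}m_v^{1/q}$ rather than the stated $m_um_v$; I would reconcile this by noting that to get Hölder in the form \fer{holder} one should normalize (by the scaling \fer{scal}, which leaves $\Phi$ invariant, one may take $m_u=m_v=1$), or one restates the limit for probability densities as in the excerpt's convention, in which case $m_u=m_v=1$ and both expressions equal $1$. The monotonicity $\Phi(0)\le\Phi(\infty)$ then reads exactly as \fer{holder}.

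**The main obstacle** is the rigorous justification of the $t\to\infty$ limit — specifically, upgrading the pointwise self-similar asymptotics $t^{1/2}u(t^{1/2}x,t)\to m_uM_2(x)$ to convergence of the integral $\int u^{1/p}v^{1/q}$, which requires a domination/uniform-integrability argument (e.g.\ comparing $u(\cdot,t)$ to a Gaussian from above via the maximum principle or an $L^1$–$L^\infty$ heat-kernel bound, then applying dominated convergence after rescaling). The differentiation step is routine once one commits to the substitution making the integrand a square; the only care needed there is the vanishing of boundary terms in the integration by parts, which the paper has agreed to take for granted.
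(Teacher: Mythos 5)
Your proposal is correct and follows essentially the same route as the paper: differentiate under the integral, integrate by parts to obtain $\frac{1}{pq}\int_\R u^{1/p}v^{1/q}\left(\frac{u_x}{u}-\frac{v_x}{v}\right)^2dx\ge 0$, and evaluate the $t\to\infty$ limit via the mass-preserving self-similar rescaling and convergence to the Gaussian. Your observation that the limit should be $\left(\int_\R u\right)^{1/p}\left(\int_\R v\right)^{1/q}$ rather than the product of the masses is also vindicated: the paper's own proof derives exactly that expression, so the limit displayed in the theorem statement is a typo (the two coincide for probability densities).
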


\begin{proof}

We outline that the functional $\Phi_{u,v}(t)$ is invariant with
respect to the scaling \fer{scal}. Moreover, the condition $u(x),
v(x)\in L^1(\R)$ is enough to ensure that $\Phi_{u,v}(t)\in L^1(\R)$
at any time $t\ge 0$. Indeed, inequality \fer{yo} implies
 \[
u(x,t)^{1/p}v(x,t)^{1/q} \le \frac 1p u(x,t) + \frac 1q v(x,t),
 \]
where, since $u(x,t)$ and $v(x,t)$ are solution to the heat
equation,
 \[
\int_\R u(x,t)\, dx = \int_\R u(x)\, dx, \quad \int_\R v(x,t)\, dx =
\int_\R v(x)\, dx.
 \]
Let us first proceed in a formal way. However, by resorting to the
smoothness properties of the solution to the heat equation, all
mathematical details can be rigorously justified.

Let us evaluate the time derivative of $\Phi(t)$. It holds
 \[
{\Phi'}_{u,v}(t) = \int_\R \left[ (u(x,t)^{1/p})_t v(x,t)^{1/q} +
u(x,t)^{1/p}(v(x,t)^{1/q})_t \right]\, dx =
 \]
 \[
\int_\R \left[ \frac 1p u^{1/p-1}  v^{1/q} u_{xx} + \frac 1q
u^{1/p}v^{1/q-1}v_{xx} \right]\, dx = \int_\R \left[ \frac 1p
u^{-1/q} v^{1/q} u_{xx} + \frac 1q u^{1/p}v^{-1/p}v_{xx} \right]\,
dx.
 \]
Integrating by parts we end up with
 \[
{\Phi'}_{u,v}(t) = \frac 1{pq} \int_\R u^{1/p} v^{1/q} \left[
\left( \frac{u_{x}}u \right)^2 - 2 \frac{u_{x}}u  \frac{v_{x}}v  +
\left( \frac{v_{x}}v \right)^2  \right]\, dx =
 \]
 \be\label{sale}
 \frac 1{pq} \int_\R u^{1/p}(x,t)
v^{1/q}(x,t) \left( \frac{u_{x}(x,t)}{u(x,t)} - \frac{v_{x}(x,t)}{v(x,t)}
\right)^2\, dx \ge 0.
 \ee
Hence the functional $\Phi_{u,v}(t)$ is increasing in time. Note
that the time derivative of the functional is equal to zero if and
only if, for every $t >0$
 \[
\frac{u_{x}(x,t)}{u(x,t)} - \frac{v_{x}(x,t)}{v(x,t)} = 0
 \]
for all points $x \in \R$. This condition can be rewritten as
 \[
\frac d{dx} \log \frac{u(x,t)}{v(v,t)} = 0.
 \]
Consequently $\Phi'(t) = 0$ if and only if
 \be\label{equ}
u(x,t) = c\, v(x,t)
 \ee
for some positive constant $c$. Thus, unless condition \fer{equ} is
verified almost everywhere at time $t = 0$, the functional $\Phi(t)$
is monotone increasing, and it will reach its eventual maximum value
as time $t\to \infty$. The computation of the limit value uses in a
substantial way the scaling invariance of $\Phi$. In fact, at each
time $t>0$, the value of $\Phi_{u,v}(t)$ does not change if we scale
$u(x,t)$ and $v(x,t)$ according to
\begin{equation}\label{FP}
 \begin{split}
 &u(x,t) \to U(x,t) = \sqrt{1+2t}\, u(x\, \sqrt{1+2t}, t)  \\
& v(x,t) \to V(x,t) = \sqrt{1+2t}\, v(x\, \sqrt{1+2t},t).\\
\end{split}
\end{equation}
On the other hand, it is well-known that \cite{CT}
 \be\label{limi}
\lim_{t\to \infty} U(x,t) = M_1(x) \, \int_\R u(x) \, dx  \quad \lim_{t\to
\infty} V(x,t) = M_1(x)\, \int_\R v(x) \, dx ,
 \ee
where, according to \fer{max} $M_1(x)$ is the Gaussian density in $\R$ of
variance equal to $1$. Therefore, passing to the limit one obtains
 \[
\lim_{t \to \infty }\Phi_{u,v}(t) = \left(\int_\R u(x)\,
dx\right)^{1/p}\left( \int_\R v(x)\, dx \right)^{1/q} \int_\R
M_1(x)^{1/p} M_1(x)^{1/q} \, dx =
 \]
 \[
\left(\int_\R u(x)\, dx\right)^{1/p}\left( \int_\R v(x)\,
dx\right)^{1/q} \int_\R M_1(x) \, dx = \left(\int_\R u(x)\,
dx\right)^{1/p}\left( \int_\R v(x)\, dx\right)^{1/q}.
 \]
Since
 \[
\lim_{t \to 0^+ }\Phi_{u,v}(t) = \int_\R u(x)^{1/p}v(x)^{1/q} \, dx,
 \]
the monotonicity of the functional $\Phi(t)$ implies the inequality
 \be\label{ho1}
\int_\R u(x)^{1/p}v(x)^{1/q} \, dx \le \left(\int_\R u(x)\,
dx\right)^{1/p}\left( \int_\R v(x)\, dx \right)^{1/q},
 \ee
with equality if and only if \fer{equ} is verified at time $t =0$,
that is
 \be\label{eq1}
u(x) = c v(x),
 \ee
for some positive constant $c$. Setting $ f= u^{1/p}$ and $g =
v^{1/q}$ proves both H\"older inequality \fer{holder} and the
equality cases.
\end{proof}

Despite its apparent complexity, this way of proof is based on a
solid physical argument, namely the monotonicity in time of a
Lyapunov functional of the solution to the heat equation. This gives
a clear indication that many inequalities reflect the physical
principle of the tendency of a system to move towards the state of
maximum entropy. In the next Section we will see how this idea
applies to prove Young's inequality.

\section{Young's inequality and Lyapunov functionals}\label{you}

The proof of the sharp Young's inequality follows along the same lines of
the proof of H\"older's inequality we presented in Section \ref{hol}. In
this case the key functional to study is the one considered by Bennett and
Bez \cite{BB}
 \be\label{key1}
\Psi_{u,v}(t) = \left( \int_\R \left(
u(x,t)^{1/p}*v(x,t)^{1/q}\right)^r \, dx \right)^{1/r},
 \ee
where, as in Young's inequality, $1/p + 1/q = 1 +1/r$. With respect
to the notations of the previous Section, there is a substantial
difference in the meaning of the functions $u(x,t)$ and $v(x,t)$.
Here $u(x,t)$ and $v(x,t)$ are still solutions of the heat equation
corresponding to the initial data $u(x)$ (respectively $v(x)$).
However, these solutions correspond to two different heat equations,
with different coefficients of diffusions, say $\alpha$ and $\beta$.
In  other words, $u(x,t)$ solves the diffusion equation
 \be\label{hu}
u_t = \a u_{xx},
 \ee
while $v(x,t)$ solves
 \be\label{hv}
v_t = \b v_{xx}.
 \ee
Hence $u(x,t)$ and $v(x,t)$ diffuse at different velocities. It is a
simple exercise to verify that, in view of the relationship between
$p,q$ and $r$, the functional $\Psi_{u,v}(t)$ is invariant with
respect to the mass preserving scaling \fer{scal}.

\begin{thm}\label{th-young}  Let $\Psi_{u,v}(t)$ be the  functional
\fer{key1},
 where $1/p + 1/q = 1+1/r$, and  $u(x,t)$ and $v(x,t)$, $t > 0$, are
solutions to the heat equation corresponding to the initial values
$u(x)\in L^1(\R)$ (respectively $v(x)\in L^1(\R)$). Then, if $p,q,r >1$,
and the diffusion coefficients in \fer{hu} and \fer{hv} are given by $\a =
q'/p$ (respectively $\b = p'/q$), or by a multiple of them,
$\Psi_{u,v}(t)$ is increasing in time from
 \[
\Psi_{u,v}(t=0) = \left( \int_\R \left(u(x)^{1/p}* v(x)^{1/q}\right)^r \,
dx \right)^{1/r},
 \]
to the limit value
 \be\label{lim11}
\lim_{t \to \infty} \Psi_{u,v}(t) = \left( A_pA_q A_{r'}\right)^{1/2}
\left( \int_\R u(x)\, dx \right)^{1/p} \left( \int_\R v(x)\, dx
\right)^{1/q}.
 \ee
If on the contrary   $0 <p,q,r <1$, and the diffusion coefficients in
\fer{hu} and \fer{hv} are given by $\a = |q'|/p$ (respectively $\b =
|p'|/q$), or by a multiple of them, $\Psi_{u,v}(t)$ is decreasing in time
from
 \[
\Psi_{u,v}(t=0) = \left( \int_\R \left(u(x)^{1/p}* v(x)^{1/q}\right)^r \,
dx \right)^{1/r},
 \]
to the limit value \fer{lim11}, where now $A_m$ is given by \fer{c-}. In
both cases $\Psi'_{u,v}(t) = 0$ if and only if $u(x,t)$ and $v(x,t)$ are
Gaussian.
\end{thm}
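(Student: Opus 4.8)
The plan is to mimic the Hölder-inequality argument of Section~\ref{hol}, replacing the pointwise product $u^{1/p}v^{1/q}$ by the convolution $u^{1/p}*v^{1/q}$ and the trivial ``Young's inequality for constants'' by a sharp lower bound for the Fisher information of a convolution. First I would differentiate $\Psi_{u,v}(t)$ in time. Writing $F=u^{1/p}$, $G=v^{1/q}$, $h=F*G$, and using \fer{hu}--\fer{hv}, one has $F_t=\alpha\bigl((1/p)u^{1/p-1}u_{xx}\bigr)$ and similarly for $G$; carrying out the chain rule and an integration by parts (exactly as in the passage from the displayed formula before \fer{sale} to \fer{sale}) converts $\partial_t F$ into a ``score-type'' expression involving $(u_x/u)^2$, and likewise for $G$. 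After integrating $\partial_t\bigl(h^r\bigr)=r h^{r-1}h_t$ over $\R$ and rearranging, $\Psi'_{u,v}(t)$ should reduce, up to a positive prefactor depending on $\alpha,\beta,p,q,r$, to an expression of the form
\[
\Psi'_{u,v}(t) = c\,\Psi_{u,v}(t)^{1-r}\left[\,\alpha\, p'^{-1}\, \mathcal I_1 + \beta\, q'^{-1}\,\mathcal I_2 - \mathcal I_3\,\right],
\]
where $\mathcal I_3$ is (a weighted version of) the Fisher information of $h=F*G$ and $\mathcal I_1,\mathcal I_2$ are the corresponding weighted Fisher informations of $F$ and $G$; the precise bookkeeping of exponents is where the choice $\alpha=q'/p$, $\beta=p'/q$ (up to a common multiple) is forced.

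Next I would invoke the generalization of the Stam/Blachman convolution inequality for Fisher information alluded to in the introduction (``a result which generalizes Stam proof of subadditivity of Fisher information''). In the classical case this is the statement that for any $\lambda\in(0,1)$, $I(F*G)\le \lambda^2 I(F)+(1-\lambda)^2 I(G)$ after appropriate normalization, with the sharp constant obtained by optimizing $\lambda$; here the weighted/$L^r$ version will produce exactly the combination $\alpha p'^{-1}\mathcal I_1+\beta q'^{-1}\mathcal I_2$ as an upper bound for $\mathcal I_3$ when the coefficients are tuned as above, so that the bracket is $\ge 0$ (hence $\Psi'\ge0$) in the regime $p,q,r>1$. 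In the reverse regime $0<p,q,r<1$ the exponent $1-r>0$ and, more importantly, the dual exponents $p',q'$ become negative, which flips the relevant convolution inequality into a reverse Fisher-information inequality; with the coefficients $\alpha=|q'|/p$, $\beta=|p'|/q$ the same algebra yields bracket $\le0$, i.e.\ $\Psi'\le0$. Equality in the sharp (reverse) Fisher-information inequality holds precisely for Gaussians, which gives the stated characterization $\Psi'_{u,v}(t)=0\iff u(\cdot,t),v(\cdot,t)$ Gaussian.

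Finally, for the limit $t\to\infty$ I would use the mass-preserving parabolic rescaling, exactly as in \fer{FP}--\fer{limi}, but now with the two different diffusion rates: set $U(x,t)=\sqrt{1+2\alpha t}\,u(x\sqrt{1+2\alpha t},t)$ and $V(x,t)=\sqrt{1+2\beta t}\,v(x\sqrt{1+2\beta t},t)$, so that $U(\cdot,t)\to M_\alpha(x)\int u$, $V(\cdot,t)\to M_\beta(x)\int v$ in $L^1$. Since $\Psi_{u,v}$ is invariant under the mass-preserving scaling \fer{scal} (noted just before the theorem), evaluating at the Gaussian limit reduces the computation to $\bigl(\int(M_\alpha^{1/p}*M_\beta^{1/q})^r\bigr)^{1/r}$ times $(\int u)^{1/p}(\int v)^{1/q}$; since a power of a Gaussian is again a Gaussian and convolutions of Gaussians are Gaussian, this integral is an explicit Gaussian integral, and the bookkeeping of the variances (using $\alpha=q'/p$, $\beta=p'/q$ and $1/p+1/q=1+1/r$) collapses precisely to the sharp constant $(A_pA_qA_{r'})^{1/2}$ of \fer{c+} (respectively \fer{c-} in the reverse case). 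Combining with $\Psi_{u,v}(0)=\bigl(\int(u^{1/p}*v^{1/q})^r\bigr)^{1/r}$ and the monotonicity yields the theorem.

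The main obstacle I expect is Step~2: establishing the correct \emph{sharp} weighted/$L^r$ Fisher-information convolution inequality (and its reverse) with the constant that matches the optimizing choice $\lambda=?$ dictated by $p,q,r$, and verifying that the coefficient choices $\alpha=q'/p$, $\beta=p'/q$ are exactly the ones that turn the raw output of the integration by parts into that sharp combination — in the reverse regime the sign reversal coming from $p',q'<0$ must be tracked with care, and one must confirm that the necessary integrability/regularity (finiteness of the relevant Fisher informations along the heat flow, justification of the integrations by parts and of passing the limit through $\Psi$) genuinely holds, which the excerpt permits us to take for granted but which is the substantive analytic content.
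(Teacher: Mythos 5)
Your outline reproduces the architecture of the paper's proof: differentiate $\Psi_{u,v}$, reduce $\frac{d}{dt}\int h^r$ (with $h=u^{1/p}*v^{1/q}$) to a combination of weighted Fisher-type informations, close the sign with a convolution inequality, and compute the limit by parabolic rescaling toward Gaussians. But the step you defer as the ``main obstacle'' is precisely the mathematical core, and invoking ``the generalization alluded to in the introduction'' leaves the argument without its load-bearing wall. The paper proves this as Lemma \ref{bl}: for probability densities $f,g$, all $a,b>0$ and all $r>0$,
\[
\left(a^2+b^2+2abr\right)\int_\R (f*g)^{r-2}\left((f*g)_x\right)^2 dx \;\le\; a^2\int_\R (f*g)^{r-1}A(f,g)\,dx \,+\, b^2\int_\R (f*g)^{r-1}B(f,g)\,dx,
\]
obtained by writing $(a+b)k'/k$ as an average of $a f'/f + b g'/g$ against the probability measure $d\mu_x(y)=f(x-y)g(y)/k(x)\,dy$, applying Jensen's inequality, multiplying by $k^{r}$, and converting the cross term $2ab\int k^{r-1}k''$ into $-2ab(r-1)\int k^{r-2}(k')^2$ by integration by parts. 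That last step is where the coefficient $2abr$ (rather than $(a+b)^2$ alone) arises, and it is exactly this $r$-dependence that makes the subsequent optimization $\Gamma(\a,\b)=(\a+\b)(r-1)-(a^2+b^2+2abr)$ vanish at $\a=q'/p$, $\b=p'/q$ with $a^2=\a p/p'$, $b^2=\b q/q'$. Without proving this lemma (and its equality cases, which are what give the Gaussian characterization of $\Psi'=0$), your proposal is a plan rather than a proof.

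A second, more localized error: in the regime $0<p,q,r<1$ you assert that the negativity of $p',q'$ ``flips the relevant convolution inequality into a reverse Fisher-information inequality.'' That is not how the argument goes, and no reverse Fisher-information inequality is available or needed. The paper applies the \emph{same} direct Lemma \ref{bl} (which holds for every $r>0$, including $r<1$) with $a^2=\a p/|p'|$, $b^2=\b q/|q'|$; what flips is the overall sign in front of the bracketed expression for $\frac{d}{dt}\int h^r$, because $p/p'=-p/|p'|$ and $r-1<0$, so the identical inequality now forces $\Psi'_{u,v}\le 0$ instead of $\ge 0$. Tracking this correctly matters: if the proof genuinely required a reversed convolution inequality for Fisher information, it would not go through.
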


\begin{proof}

Let us consider first the case in which $p,q,r >1$. Without loss of
generality, we will assume that both the initial data $u(x)$ and
$v(x)$ are probability density functions. This is sufficient to show
that, for any time $ t >0$, the functional $\Psi_{u,v}(t)$ is
bounded. Indeed we can write
 \[
\int_\R u(x-y)^{1/p}v(y)^{1/q} \, dy =
 \]
 \[
\int_{\{ u(x-y) \le v(y)\}} u(x-y)^{1/p} v(y)^{1/q} \, dy + \int_{\{u(y)
> v(x-y)\}} u(y)^{1/p}v(x-y)^{1/q} \, dy.
 \]
Now, since $r>1$, and $v(x,t)$ has mass equal to $1$, Jensen's
inequality implies
 \[
\left(\int_{\{ u(x-y) \le v(y)\}} u(x-y)^{1/p} v(y)^{1/q} \, dy \right)^r
=
 \]
 \[
\left(\int_{\{ u(x-y) \le v(y)\}} u(x-y)^{1/p} v(y)^{1/q-1} v(y) \, dy
\right)^r \le
 \]
 \[
\int_{\{ u(x-y) \le v(y)\}} u(x-y)^{r/p} v(y)^{r/q-r} v(y)  \, dy  =
 \]
 \[
\int_{\{ u(x-y) \le v(y)\}} u(x-y)^{r/p} v(y)^{r/q-r +1}  \, dy .
 \]
Note that
 \[
\frac rp + \frac rq -r + 1 = 2, \quad \frac rp > 1.
 \]
Therefore, on the set $ \{ u(x-y) \le v(y)\}$, since the exponent of
$v(y)$ is smaller than $1$,
 \be\label{bo1}
u(x-y)^{r/p} v(y)^{r/q-r +1} \le u(x-y)v(y).
 \ee
Inequality \fer{bo1} follows simply dividing by $u^2$. Therefore
  \[
\left(\int_{\{ u(x-y) \le v(y)\}} u(x-y)^{1/p} v(y)^{1/q} \, dy \right)^r
\le \int_{\{ u(x-y) \le v(y)\}} u(x-y) v(y) \, dy  \le 1 .
 \]
Identical computations show that
   \[
\left(\int_{\{ u(y) > v(x-y)\}} u(y)^{1/p} v(x-y)^{1/q} \, dy \right)^r
\le \int_{\{ u(x-y) \le v(y)\}} u(y) v(x-y) \, dy  \le 1.
 \]
 Therefore
 \[
\int_\R \left(u(x)^{1/p}* v(x)^{1/q}\right)^r  \le 2 c_r,
 \]
where $c_r$ is the positive constant in the inequality
 \[
(a+b)^r \le c_r(a^r + b^r).
 \]
We proceed now to compute the time derivative of the functional
$\Psi_{u,v}(t)$.  To shorten, let us denote
 \be\label{def1}
 h(x,t) = u(x,t)^{1/p}*v(x,t)^{1/q}.
 \ee
Since $u(x,t)$ and $v(x,t)$ are solutions to the heat equation
 \[
\frac{\partial}{\partial t}h(x,t)= \frac{\partial}{\partial t} \int_\R
u(x-y,t)^{1/p}v(y,t)^{1/q} \, dy =
 \]
 \[
\frac 1p \int_\R u(x-y,t)^{1/p-1}u_t(x-y,t )v(y,t)^{1/q} \, dy + \frac 1q
\int_\R u(x-y,t)^{1/p}v(y,t)^{1/q-1}v_t(y,t ) \, dy =
 \]
 \[
\frac{\a}p \int_\R u(x-y,t)^{1/p-1}u_{xx}(x-y,t )v(y,t)^{1/q} \, dy +
\frac{\b}q \int_\R u(x-y,t)^{1/p}v(y,t)^{1/q-1}v_{yy}(y,t ) \, dy .
 \]
On the other hand, we have
 \[
\frac{\partial^2}{\partial x^2}h(x,t)=
 \int_\R \frac{\partial}{\partial x}\left( \frac 1p
u(x-y,t)^{1/p-1}u_x(x-y,t)\right)v(y,t)^{1/q} \, dy =
 \]
 \[
\frac 1p \int_\R u(x-y,t)^{1/p-1}u_{xx}(x-y,t)v(y,t)^{1/q} \, dy +
 \]
 \be\label{f1}
\frac 1p\left( \frac 1p -1 \right)\int_\R
u(x-y,t)^{1/p-2}u_{x}^2(x-y,t)v(y,t)^{1/q} \, dy.
 \ee
Hence
 \[
\frac{1}p \int_\R u(x-y,t)^{1/p-1}u_{xx}(x-y,t )v(y,t)^{1/q} \, dy =
\]
 \be\label{f2}
\frac{\partial^2}{\partial x^2} h(x,t) + \frac 1{pp'}\int_\R
u(x-y,t)^{1/p}v(y,t)^{1/q}\left(\frac{u_{x}}{u}\right)^2(x-y,t) \, dy.
 \ee
Analogous formula for the last integral in \fer{f1}. Therefore we have
 \[
\frac{\partial}{\partial t}h(x,t) =  (\a + \b) \frac{\partial^2}{\partial
x^2} h(x,t) +
 \]
 \[
\frac{\a}{pp'}\int_\R
u(x-y,t)^{1/p}v(y,t)^{1/q}\left(\frac{u_{x}}{u}\right)^2(x-y,t) \, dy +
 \]
  \be\label{f3}
\frac{\b}{qq'}\int_\R
u(x-y,t)^{1/p}v(y,t)^{1/q}\left(\frac{v_{x}}{v}\right)^2(y,t) \, dy.
 \ee
Making use of formula \fer{f3}, we obtain
 \[
\frac d{dt }\int_\R h^r(x,t) \, dx =
 r \int_\R h(x,t)^{r-1} h_t(x,t) \, dx =
 r (\a +\b) \int_\R h^{r-1}(x,t) h_{xx}(x,t) \, dx +
 \]
 \[
\int_\R h^{r-1}(x,t) \left[ \frac{\a}{pp'}\int_\R
u(x-y,t)^{1/p}v(y,t)^{1/q}\left(\frac{u_{x}}{u}\right)^2(x-y,t) \, dy
+\right.
\]
\[
\left. \frac{\b}{qq'}\int_\R
u(x-y,t)^{1/p}v(y,t)^{1/q}\left(\frac{v_{x}}{v}\right)^2(y,t) \, dy
\right] \, dx.
\]
Since it holds
 \be\label{ff2}
\frac{\left( u^{1/p}\right)_x }{u^{1/p}} = \frac 1p \frac{ u_x }{u}, \quad
\frac{\left( v^{1/q}\right)_x }{v^{1/p}} = \frac 1q \frac{ v_x }{v}
 \ee
we obtain
 \[
\frac{\a}{pp'}\int_\R
u(x-y,t)^{1/p}v(y,t)^{1/q}\left(\frac{u_{x}}{u}\right)^2(x-y,t) \, dy = \a
\frac{p}{p'}\int_\R \frac{\left( u_x^{1/p}(x-y,t)\right)^2}{u^{1/p}(x-y)}
v(y,t)^{1/q}\, dy,
 \]
and
 \[
\frac{\b}{qq'}\int_\R
u(x-y,t)^{1/p}v(y,t)^{1/q}\left(\frac{v_{x}}{v}\right)^2(y,t)\, dy = \b
\frac{q}{q'}\int_\R u^{1/p}(x-y)\frac{\left(
v_y^{1/q}(y,t)\right)^2}{v^{1/q}(y)}\, dy.
 \]

 Finally
 \[
\frac 1r \frac d{dt}\int_\R h^r(x,t) \, dx =
  -(\a + \b) (r-1) \int_\R h^{r-2}(x,t) \left(h_x(x,t)\right)^2 \,
 dx +
 \]
 \be\label{acca1}
\a \frac{p}{p'} \int_\R h^{r-1}(x,t) A(u^{1/p},v^{1/q})(x,t) \, dx  + \b
\frac{q}{q'} \int_\R h^{r-1}(x,t) B(u^{1/p},v^{1/q})(x,t) \, dx.
 \ee
In \fer{acca1} we defined
 \be\label{a1}
A(f,g)(x,t) = \int_\R \frac{\left( f_x(x-y,t)\right)^2}{f(x-y)} g(y,t)\,
dy,
 \ee
and
 \be\label{b1}
B(f,g)(x,t) = \int_\R f(x-y)\frac{\left( g_y(y,t)\right)^2}{g(y)}\, dy.
 \ee
Since
 \[
\frac{d\Psi_{u,v}(t)}{dt} = \Psi_{u,v}(t)^{1-r} \, \frac d{dt}\int_\R
h^r(x,t) \, dx,
 \]
the sign of the time derivative of the functional $\Psi_{u,v}(t)$ depends
of the sign of the expression on the right-hand side of \fer{acca1}. In
order to determine this sign, the following Lemma will be of paramount
importance.

\begin{lem}\label{bl} Let $f(x)$ and $g(x)$  be probability density
functions such that both $A(f,g)$ and $B(f,g)$, given by \fer{a1} and
\fer{b1}, are well defined. Then, for all positive constants $a,b$ and $r
> 0$
 \[
\left( a^2 + b^2 + 2abr \right) \int_\R (f*g)^{r-2}\left( (f*g)_x\right)^2
\, dx \le
 \]
 \be\label{fish}
a^2 \int_\R (f*g)^{r-1} A(f,g)\, dx + b^2 \int_\R (f*g)^{r-1} B(f,g)\, dx.
 \ee
Moreover, there is equality in \fer{fish} if and only if, for any positive
constant $c$ and constants $m_1, m_2$,  $f$ and $g$ are Gaussian
densities, $f(x)= M_{ca}(x-m_1)$ and $g(x) = M_{cb}(x-m_2)$.
\end{lem}

\begin{proof}

The proof will follow along the same lines of the analogous one for Fisher
information, given by Blachman \cite{Bla}.  First of all, to easily
justify computations, let us prove the lemma by considering smooth
functions $f$ and $g$. Then the proof for general $f$ and $g$ will follow
owing to the convexity properties of $A$ and $B$ \cite{LT}. This can be
easily done by considering $f*M_t$ and $g*M_t$ solutions to the heat
equation for some $t>0$.  Let
\[
k(x) = f*g(x).
\]
Then, for any pair of  positive constants  $a,b$
 \[
(a+b)k'(x) = a \int_\R f'(x-y)g(y) \, dy + b \int_\R f(x-y)g'(y) \, dy.
 \]
Therefore
 \[
(a+b)\frac{k'(x)}{k(x)}  = a \int_\R \frac{f'(x-y)}{f(x-y)}
\frac{f(x-y)g(y)}{k(x)} \, dy + b \int_\R
\frac{g'(y)}{g(y)}\frac{f(x-y)g(y)}{k(x)} \, dy =
 \]
 \[
\int_\R \left(a \frac{f'(x-y)}{f(x-y)} + b \frac{g'(y)}{g(y)}\right) \,
d\mu_x(y),
 \]
where we denoted
 \[
d\mu_x(y) = \frac{f(x-y)g(y)}{k(x)} \, dy.
 \]
Note that, for every $x \in \R$, $d\mu_x$ is a unit measure on $\R$.
Consequently, by Jensen's inequality
 \[
(a+b)^2 \left[\frac{k'(x)}{k(x)}\right]^2 = \left[ \int_\R \left(a
\frac{f'(x-y)}{f(x-y)} + b \frac{g'(y)}{g(y)}\right) \, d\mu_x(y)
\right]^2 \le
 \]
 \be\label{jen}
\int_\R \left(a \frac{f'(x-y)}{f(x-y)} + b \frac{g'(y)}{g(y)}\right)^2 \,
d\mu_x(y).
 \ee
Hence, for every constant $r >0$
 \[
(a+b)^2 \int_\R k^r(x)\, \left[\frac{k'(x)}{k(x)}\right]^2 \, dx \le
 \]
 \[
\int_\R k^r(x)\, \int_\R \left(a \frac{f'(x-y)}{f(x-y)} + b
\frac{g'(y)}{g(y)}\right)^2 \,\frac{f(x-y)g(y)}{k(x)} \, dy  \, dx =
 \]
 \[
\int_\R k^{r-1} (x)\, \left[ a^2 \int_\R \frac{(f'(x-y))^2}{f(x-y)}g(y)\,
dy + b^2 \int_\R \frac{(g'(y))^2}{g(y)}f(x-y) \, dy \right] \, dx +
 \]
 \[
2ab \int_\R k^{r-1}(x) \int_\R f'(x-y)g'(y)\, dy \, dx.
 \]
On the other hand,
 \[
\int_\R f'(x-y)g'(y)\, dy  =  k''(x),
 \]
 so that
 \[
\int_\R k^{r-1}(x) \int_\R f'(x-y)g'(y)\, dy \, dx =
 \]
 \[
\int_\R k^{r-1}(x) r''(x) \, dx = - (r-1) \int_\R k^{r-2}(x)(k'(x))^2\,
dx.
 \]
This concludes the proof of the lemma. The cases of equality are easily
found from the following argument. Equality follows if, after application
of Jensen's inequality,  there is equality in \fer{jen}. On the other
hand, for any convex function $\varphi$ and unit measure $d\mu$ on the set
$\Omega$, equality in Jensen's inequality
 \[
 \varphi(\int_\Omega f\,d\mu) \le \int_\Omega \varphi(f) \, d\mu
 \]
holds true if and only if $f$ is constant, so that
 \[
f = \int_\Omega f\,d\mu.
 \]
In our case, this means that there is equality if and only if the
function
 \[
a \frac{f'(x-y)}{f(x-y)} + b \frac{g'(y)}{g(y)}
 \]
does not depend on $y$. If this is the case, taking the derivative
with respect to $y$, and using the identity
 \[
\frac d{dy}\left(\frac{f'(x-y)}{f(x-y)}\right) = -\frac
d{dx}\left(\frac{f'(x-y)}{f(x-y)}\right),
 \]
we conclude that $f$ and $g$ have to satisfy
 \be\label{eq3}
a \frac{d^2}{dx^2}\log f(x-y) = b \frac{d^2}{dy^2}\log g(y).
 \ee
Note that \fer{eq3} can be verified if and only if the functions on both
sides are constant. Thus, there is equality if and only if
 \be\label{m2}
\log f(x) = b_1x^2 + c_1 x + d_1, \quad \log g(x) = b_2x^2 + c_2 x +
d_2 .
 \ee
By coupling \fer{m2} with \fer{eq3}, we obtain that there is equality in
\fer{fish} if and only if $f$ and $g$ are gaussian densities, of variances
$ca$ and $cb$, respectively, for any given positive constant $c$.

\end{proof}

The case $r =1$ has been treated in Blachman \cite{Bla}, as part of
his proof of the entropy power inequality \fer{entr}. In this case
 \be\label{fisher}
I(f) = \int_\R \frac{(f'(x))^2}{f(x)} \, dx
 \ee
denotes the Fisher information of the probability density $f$, and
inequality \fer{fish} becomes
 \[
(a+b)^2 I(f*g) \le a^2 I(f) + b^2 I(g).
 \]
We remark that the validity of \fer{fish} is not restricted to probability
density functions. Indeed, it continues to hold for nonnegative functions
of any given mass.

Let us apply the result of Lemma \ref{bl} to control the sign of the
right-hand side in formula \fer{acca1}. If we choose $a^2 = \a
p/{p'}$, $b^2 = \b q/{q'}$ in \fer{fish},  then the coefficient of
the term on the left-hand side of inequality \fer{fish} assumes the
value
 \[
a^2 + b^2 + 2ab r = \a \frac p{p'} +  \b \frac q{q'} + 2\sqrt{ \a \b}
\sqrt{\frac{pq}{p'q'}} r.
 \]
Let us introduce, for any given $r >1$ the function
 \be\label{conv}
 \Gamma (\a, \b) = (\a + \b)(r-1) - \left(\a \frac p{p'} +  \b \frac q{q'} + 2\sqrt{ \a
 \b}\sqrt{\frac{pq}{p'q'}} r \right).
 \ee
It is clear that, as soon as for some values of $\a,\b$ the function
$\Gamma (\a,\b) \le 0$, the expression on the right-hand side of
\fer{acca1} is nonnegative, and the functional $\Psi_{u,v}(t)$ is
increasing.  In order to check its sign, consider that the function
$\Gamma$ is jointly convex, and it is such that, for any positive constant
$c$
 \[
\Gamma( c\a, c\b) = c \Gamma(\a,\b).
 \]
Therefore, if a point $(\a = \a_0, \b = \b_0)$ is an extremal point, also
the point $(c\a_0, c \b_0)$ is an extremal point, and $\Gamma$ admits the
half-line  $\b_0 \a = \a_0 \b$ of extremals. Since
 \[
\frac{\partial \Gamma}{\partial \a} = r- 1 -\frac p{p'} -
\sqrt{\frac \a\b} \sqrt{\frac{pq}{p'q'}} r,
 \]
by adding and subtracting the quantity $pr/q'$ we obtain
 \[
\frac{\partial \Gamma}{\partial \a} = r + \frac p{q'} r - 1 -\frac
p{p'} - \sqrt{\frac \a\b} \sqrt{\frac{pq}{p'q'}}r  + \frac p{q'} r =
 \]
 \[
pr\left( \frac 1p- \frac 1{q'}\right) - p\left( \frac 1p + \frac
1{p'} \right) - \sqrt{\frac \a\b} \sqrt{\frac{pq}{p'q'}}r  + \frac
p{q'} r.
 \]
Since
 \be\label{id1}
\frac 1p- \frac 1{q'} = \frac 1q- \frac 1{p'}= \frac 1r,
 \ee
one obtains
 \[
\frac{\partial \Gamma}{\partial \a} = - \sqrt{\frac \a\b}
\sqrt{\frac{pq}{p'q'}}r  + \frac p{q'} r = 0
 \]
if the point $(\a, \b)$ belong to the half-line
 \be\label{line}
\b = \frac p{q'}\cdot \frac{p'}q \a.
 \ee
Same result is obtained if we impose the vanishing of the partial
derivative of $\Gamma$ with respect to $\b$. On the other hand, thanks to
identity \fer{id1}
 \[
\Gamma\left( \frac{q'}p, \frac{p'}q \right) = \left( \frac{q'}p +
\frac{p'}q \right) (r-1) - \frac{q'}{p'} - \frac{p'}{q'} -2r =
 \]
 \[
q'r\left( \frac 1p - \frac 1{q'}\right) + p'r\left( \frac 1q - \frac
1{p'}\right) - q' -p' = 0.
 \]
Hence, along the line \fer{line}, in view of lemma \ref{bl} the
functional $\Phi_{u,v}(t)$ is increasing with respect to $t$.
Proceeding as in Section \ref{hol}, namely by scaling $u(x,t)$ and
$v(x,t)$ as in \fer{FP}, we conclude that the functional will keep
its maximum value as time goes to infinity, and
 \be\label{max-young}
\lim_{t\to \infty}\Phi_{u,v}(t) = \left( \int_\R u(x)\, dx
\right)^{1/p} \left( \int_\R v(x)\, dx \right)^{1/q} C(p,q,r),
 \ee
where
 \be\label{con}
 C(p,q,r) = \left( \int_\R \left(
M_{q'/p}(x)^{1/p}*M_{p'/q}(x)^{1/q}\right)^r \, dx. \right)^{1/r}.
 \ee
Using that the convolution of Gaussian functions is a Gaussian
function, $M_\a * M_\b = M_{\a + \b}$, we compute
 \[
 \left( \int_\R \left(
M_\a^{1/p}*M_{\b}^{1/q}\right)^r \, dx. \right)^{1/r} = \left[
\frac{p\a}{\a^{1/p}} \frac{q\b}{\b^{1/q}} \frac{(p\a
+q\b)^{1/r}}{r^{1/r}(p\a +q\b)} \right]^{1/2}.
 \]
The choice $\a = q'/p$, $\b = p'/q$ gives
 \[
C(p,q,r) =\left( A_pA_q A_{r'}\right)^{1/2},
 \]
where $A_m$ is defined by \fer{c+}. This concludes the proof of the first
part of Theorem \ref{th-young}.

The case in which $1/p + 1/q = 1+1/r$, but $ 0< p,q,r <1$ can be treated
likewise. In this case the dual exponents $p',q',r'$ are negative, and
formula \fer{acca1} takes the form
 \[
\frac 1r \frac d{dt}\int_\R h^r(x,t) \, dx =
  -\left[ -(\a + \b) (1-r) \int_\R h^{r-2}(x,t) \left(h_x(x,t)\right)^2 \,
 dx + \right.
 \]
 \be\label{acca2}
\left. \a \frac{p}{|p'|} \int_\R h^{r-1}(x,t) A(u^{1/p},v^{1/q})(x,t) \,
dx + \b \frac{q}{|q'|} \int_\R h^{r-1}(x,t) B(u^{1/p},v^{1/q})(x,t) \, dx
\right]
 \ee
To control the sign of the quantity into square brackets, we introduce now
the function
 \be\label{conv1}
 \tilde\Gamma (\a, \b) = (\a + \b)(1-r) - \left(\a \frac p{|p'|} +  \b \frac q{|q'|} + 2\sqrt{ \a
 \b}\sqrt{\frac{pq}{|p'||q'|}} r \right).
 \ee
In this case
 \[
\frac{\partial \Gamma}{\partial \a} =  1 -r  +\frac p{p'} + \sqrt{\frac
\a\b} \sqrt{\frac{pq}{|p'||q'|}} r.
 \]
By adding and subtracting the quantity $pr/q'$ we obtain as before
 \[
\frac{\partial \Gamma}{\partial \a} = \sqrt{\frac \a\b}
\sqrt{\frac{pq}{|p'||q'|}}r  + \frac p{q'} r = 0
 \]
if the point $(\a, \b)$ belong to the half-line
 \be\label{line}
\b = \frac p{|q'|}\cdot \frac{|p'|}q \a.
 \ee
This choice however implies that the right-hand side in \fer{acca2} is non
positive, and the functional $\Psi_{u,v}(t)$ decreases. This leads to the
reverse Young's inequality \fer{young-r}.

\end{proof}

\section{Conclusions}

In this paper we presented a new proof of the sharp form of Young's
inequality for convolutions, as well as and its reverse form. For the sake
of simplicity, this proof has been done in dimension $n=1$. Looking at the
details of the computations, it appears evident that the proof still holds
in dimension $n >1$, since the computations in higher dimension do not
affect the constants both in formula \fer{acca1} and \fer{fish}, which are
at the basis of the whole procedure. The main difference relays in the
fact that the Gaussian functions are $n$-dimensional Gaussians, which lead
to the additional presence of the exponent $n$ in the sharp constant.
Hence Theorem \ref{th-young} leads to the sharp inequality \fer{young} in
any dimension, without any additional (if not computational) difficulty.
Also, both Young's inequality and its reverse form are here derived by a
unique well understandable physical principle, in the form of time
monotonicity of a Lyapunov functional.
\bigskip \noindent

{\bf Acknowledgment:} The author acknowledge support by MIUR project
``Optimal mass transportation, geometrical and functional
inequalities with applications''.

\end{document}